\DeclareMathOperator*{\argmax}{arg\,max}
\newcommand{\trps}{^{\small \mathsf{T}}}																	
\newcommand{\diff}[1]{\text{d}{#1}}
\DeclarePairedDelimiterX{\norm}[1]{\lVert}{\rVert}{#1}
\renewcommand{\cite}[1]{[\citen{#1}]}
\newtheorem{theorem}{Theorem}[section]
\newtheorem{lemma}[theorem]{Lemma}
\newtheorem{remark}[theorem]{Remark}
\title{\LARGE \bf
	Bounded-Rational Pursuit-Evasion Games
}
\author{
	Yue Guan%
	\thanks{PhD Candidate, School of Aerospace Engineering, Georgia Institute of Technology, Atlanta, GA, 30332-0150, USA. Email: {\tt\small yguan44@gatech.edu}}\\
	\And 
	Dipankar Maity%
	\thanks{Postdoctoral  Fellow, School of Aerospace Engineering, Georgia Institute of Technology, Atlanta, GA, 30332-0150, USA. Email: {\tt\small dmaity@gatech.edu}}\\
	\And
	Christopher M. Kroninger%
	\thanks{Research Scientist, Combat Capabilities Development Directorate,  Army Research Laboratory, Aberdeen Proving Ground, MD, 21005-5066, USA. Email:
		{\tt\small christopher.m.kroninger.civ@mail.mil}}\\
	\And
	Panagiotis Tsiotras%
	\thanks{David \& Andrew Lewis Chair and Professor, School of Aerospace Engineering,  Georgia Institute of Technology, Atlanta, GA 30332-0150, USA. Email: {\tt\small tsiotras@gatech.edu}, AIAA Fellow.}\\	
}
\begin{document}
	
	\maketitle
	
	\thispagestyle{empty}
	\pagestyle{empty}

	\begin{abstract} \label{sec:abstract}
	We present a framework that incorporates the idea of bounded rationality into dynamic stochastic pursuit-evasion games.
    The solution of a stochastic game is characterized, in general, by its (Nash) equilibria in feedback form.
	However, computing these Nash equilibrium strategies may require extensive computational resources.
    In this paper, the agents are modeled as bounded rational entities  having limited computational resources.
    We illustrate the framework by applying it to a pursuit-evasion game between two vehicles
    in a stochastic wind field, where both the pursuer and the evader are bounded rational.
    We show how such a game may be analyzed by properly casting it as an iterative sequence of finite-state Markov Decision Processes (MDPs).
    Leveraging tools and algorithms from cognitive hierarchy theory (``level-$k$ thinking") we compute the solution of the ensuing discrete game, while taking into consideration the rationality level of each agent.
    We also present an online algorithm for each agent to infer its opponent rationality level.
	\end{abstract}
	
	
	
	\section{Introduction}	\label{sec:intro}
	
	Pursuit-evasion games are a special class of dynamic games introduced in the 60s \cite{Issac:1965,willman1969formal,Ho:1965,Behn:1968}, which capture a wide range of diverse behaviors amongst dynamic adversarial agents.
	An extensive amount of literature exists on the topic;
	some notable examples include~\cite{Basar:1999,Balakrishnan:1973,Ho:1965,Hespanha:1999,Behn:1968,Bopardikar:2008,Shinar:1980,Lin:2015,vidal2002probabilistic}.
	However, most of these prior works \cite{Basar:1999,Ho:1965,Hespanha:2000,Shinar:1980,Willman:1969,Lin:2015} are concerned with finding the equilibrium strategy(s) of the game, which requires the assumption that all agents are rational~\cite{Myerson:1997}.
	The assumption of perfect rationality is considered to be too strong and perhaps unrealistic for many applications \cite{Simon:1985,Gilovich:2002}.
	Furthermore, finding the equilibrium of a game is, in general, computationally expensive~\cite{Daskalakis:2006,Starr:1969}.
	To address these issues, we study pursuit-evasion games (PEGs) with agents that are not perfectly rational, but rather \textit{bounded rational}.
	The idea of {bounded rationality} \cite{Kahneman:2003} was introduced in the economics community to address the issue
	that most agents do not adhere to the paradigm of maximum utility theory that assumes perfectly informed, rational agents.
	Several manifestations of bounded rationality exist in the literature, but in this paper we will adopt the one that makes use of
	the model of \textit{cognitive hierarchies} (CH) \cite{Ho:2004} and \textit{level-$k$ thinking} \cite{Ho:2013}.
	Under this notion, the agent no longer seeks the (Nash) equilibrium, but instead seeks a best response to some (believed or inferred)
	strategy of its opponent(s) according to an assumed level-$k$ responding structure.
	
	In this work, we formulate a PEG  involving bounded-rational agents in a stochastic environment (e.g., UAVs in a wind field).
	Specifically, we want to address how, in an air-combat scenario, an agent should optimally respond if its opponent(s) is not perfectly rational.
	 We start by considering a continuous-time PEG with continuous state-space, finite control/action set, perfect state observation, and stochastic unbounded noise.
	Building upon the Markov Chain approximation method~\cite{Kushner:1992}, we then construct a series of finite-state,
	discrete-time PEGs as an approximation to the original  continuous time, continuous state-space problem.
	We subsequently introduce the level-$k$ thinking framework to solve this discrete PEG in order
	to capture the bounded rational properties of the players.
	In the level-$k$ framework, each agent is assumed to best respond to the strategy of its opponent(s).
	This leads to an iterative decision process for both the pursue and the evader.
	The iterative process begins with ``level-$0$" pursuer(s) and evader(s), who have no prior assumptions about their opponents, or the structure of the game, and merely act randomly; i.e., they implement a random policy where each action is taken equally likely.
	The ``level-$k$" pursuer(s) and evader(s) presume that their opponent(s) are of ``level-($k$-1)", and they respond accordingly based on this assumption.
	The notion of rationality is captured through the parameter $k$; the higher the value of $k$, the higher the level of rationality of the agent and the higher this iterative process proceeds.
	Consequently, and in order to capture the notion of bounded rationality~\cite{Ho:2013,Li:2016}, each agent has a maximum level beyond which it cannot operate (different agents may have different maximum levels).
	Under this framework, the computational complexity of finding the optimal policy for a participating agent is significantly reduced~\cite{Li:2016}, since the game can be cast as sequence of single-sided Markov-Decision-Process (MDP) problems~\cite{Ho:2013} for each agent,
each one of which can be solved efficiently~\cite{Filar:1996}.
	
	One challenge in this framework is that in order to pick the optimal policies, each agent needs to know the rationality level of its opponent(s).
To perform such an inference, we  adopt an inference algorithm  based on the observed agent trajectories~\cite{Ho:2013}.
The algorithm assumes that both agents' rationality levels are fixed, and then lets the agents use a maximum likelihood estimator to update their beliefs regarding their opponent's rationality level.
We then generalize this inferring algorithm to the case where the opponent's perceived rationality level is fixed, while each agent adapts its rationality level based on its belief about its opponent's rationality level.

This work applies the framework of cognitive hierarchy \cite{Ho:2004} to PEGs and utilizes the concept of the agent's \textit{rationality level}.
The main contribution of this paper is the detailed derivation of a comprehensive and implementable method to find solutions to stochastic
PEGs involving bounded rational agents.
The proposed method first discretizes the continuous PEGs in a way that ensures the convergence of the optimal solution of the discretized game to that of the original continuous-state game.
In addition, we show that if level-$k$ thinking is applied to encode bounded rational decision-making of the agents,
the discrete PEGs can be solved efficiently using value iteration of the corresponding one-sided MDP problems for each agent.
The (optimal) policy adopted by an agent depends not only on its own (bounded) rationality level but also on
the rationality level of its opponent(s).
As a result, estimating or inferring the rationality level of the opponents is also of great interest.	
Therefore, we also propose an inferring algorithm that updates the agents' beliefs about the rationality level of their opponent(s) during the game.
The effectiveness of the proposed approach is demonstrated using a two-dimensional two-agent PEG in a stochastic wind field, in which the two agents' rationalities are bounded by different maximum levels.

The rest of the paper is organized as follows.
Section~\ref{sec:formulation} presents the formulation of the pursuit-evasion game in a stochastic environment and applies the Markov Chain approximation method to discretize the problem \cite{Kushner:1992}.
Section~\ref{sec:level-k} introduces the level-$k$ thinking framework and  Section~\ref{sec:value_Iteration} presents algorithmic methods for solving this problem.
This is followed by Section~\ref{sec:infer} that discusses the proposed inferring algorithms, and Section~\ref{sec:example} that presents an illustrative example of a pursuit-evasion game in a stochastic wind field, with two  bounded-rational agents.
Finally,  the paper is concluded in Section \ref{sec:conclusion} with a summary of the results and some possible extensions for future work.

	\section{Problem Formulation}\label{sec:formulation}
	
	We start by introducing a pursuit-evasion game which evolves in a stochastic environment (wind field) in continuous time.
	Under certain assumptions, the game can be discretized into a \textit{competitive Markov Decision Process} (cMDP).
	The connection between the cMDP and an associated standard one-sided MDP problem is addressed.

	\subsection{Continuous-Time PEG}\label{subsec:continous_PEG}
	
	Let us consider a two-agent pursuit-evasion differential game in which the pursuer and the evader are indexed by $i =1$ and $i=2$, respectively.
	In what follows, we will also use the index $-i$ to denote the opponent of agent $i$. As the names of the agents suggest, the \textit{pursuer} tries to \textit{capture} the evader, while the \textit{evader} tires to \textit{evade} the pursuer.
	Without loss of generality, we henceforth assume that $i=1$ is the pursuer and $i=2$ is the evader.
	For simplicity, we assume that the game evolves in a two-dimensional compact {domain} $C \subset \mathbb{R}^2$, and the position (state) of an agent at time $t$ is denoted by $p^i(t)=[p^i_x(t),p^i_y(t)]\trps\in \mathbb R^2$, for $i=1,2$.
	We define the state of the game as the joint positions of the two agents.
	Specifically, at a given time $t \ge 0$, the state of the game is represented as $s(t)=\big[p^1(t)\trps,p^2(t)\trps\big]\trps \in S = C\times C \subset \mathbb{R}^4$.
	
\subsubsection{Wind Field and Game Dynamics}	
	
	The game is played in the presence of a stochastic wind field, which will be modeled by a variant of a Wiener process.
	The only knowledge  the agents have regarding the stochastic wind field is the mean and standard deviation of the wind velocity.
	The dynamics of the agents are governed by stochastic differential equations (SDE) as follows
	\begin{align} \label{eqn:system_SDE}
	   \diff p^i=\begin{bmatrix} v^i\cos\theta^i+w_x(p^i) \\ v^i \sin{\theta^i} +w_y(p^i)
	    \end{bmatrix}\diff t+\diff W_t^i {(p^i)}, \quad i=1,2,
	\end{align}
	where $w_x(p^i)$ denotes the mean wind velocity in the $x$-direction at $p^i$, while $w_y$ denotes the mean wind velocity in the $y$-direction.
	In~(\ref{eqn:system_SDE}) $\diff W^i_t(p^i)$ is a two-dimensional Wiener process with the following properties:
	a) $\displaystyle \mathbb E[\diff W^i_t(p) \diff W^i_t(p)\trps]=\sigma^2_w(p) \mathbb{I}_2$
	\footnote{ For simplicity, in this paper we will assume $\sigma_w(p)=\sigma_w$ for all $p$.
	}	;
	b) At any location $p\in \mathbb R^2$, the Wiener increments are independent at different time instances, $\mathbb E[\diff W^i_t(p) \diff W^i_{\tau}(p)\trps]=0$ for all $\tau \ne t$;
	c) At any time $t$, the wind disturbance is spatially uncorrelated, i.e.,  $\mathbb E[\diff W^i_t(p) \diff W^i_t(q)\trps]=0$ for $p\ne q$
	\footnote{In case the wind disturbance is spatially correlated, then $\displaystyle \mathbb E[\diff W^i_t(p) \diff W^i_t(q)\trps] = h(p,q)$, where $h(p,p) = \sigma^2_w(p) \mathbb{I}_2$,  and $h(p,q) \ge 0$ for $p \ne q$.
	A popular choice for random field models in geostatistics is $ \displaystyle h(p,q)=\exp(-\gamma\|p-q\|)\mathbb{I}_2$ for some $\gamma>0$ \cite{isaaks1989applied}.}.
	For brevity, we will denote the joint action $(\theta ^1,\theta^2) \in \Theta$ by $\theta$, and write
	$\text{d}W^i_t$ instead of $\diff W^i_t(p^i)$.
	We will also write for brevity
	$b^i(p^i,\theta^i) = \left[
		v^i \cos{\theta^i} + w_x(p^i),~v^i \sin{\theta^i} + w_y(p^i)
		 \right]\trps$.
		
		 The action set for each agent contains a collection of a finite number of possible choices for the desired heading angles
		 $\theta^1$ and $\theta^2$.
	Specifically, it will be assumed that the agent $i$ moves along the direction of its heading angle $\theta ^i \in \Theta^i= \{0,\pi/2, \pi, 3\pi/2\}$ at a fixed speed $v^i \ge 0$.
	The joint action space for both the players is denoted as $\Theta=\Theta^1\times \Theta^2$.
	The dynamics of the state of the game  can then be re-written as
	\begin{equation}\label{eqn:SDE}
	\text{d}s =  \left[ {\begin{array}{c}
		b^1(p^1,\theta^1)	\\b^2(p^2,\theta^2)
		\end{array} } \right] \mathrm{d}t + \left[ {\begin{array}{c}
		\text{d}W^1_t	\\\text{d}W^2_t
		\end{array} } \right] = b(s,\theta)\, \mathrm{d}t +  \mathrm{d}W_t,
	\end{equation}
	where $b(s,\theta)=[b^1(p^1,\theta^1)\trps,b^2(p^2,\theta^2)\trps]\trps$, and $\diff W_t=[{\diff W^1_t}\trps, {\diff W^2_t}\trps]\trps$.
	
	The solution of the SDE in (\ref{eqn:SDE}) is a stochastic process $\{s(t);t\geq0\}$ such that
	\begin{align}\label{eqn:SDE_sol}
	s(t) =& s(0) + \int_{0}^{t} b(s(\tau),\theta(\tau)) \text{d} \tau + \int_{0}^{t}  \text{d} W_\tau\\
	\triangleq & ~~\Phi (s(0),t,\theta^{[0,t]}, \diff W^{[0,t]}),
	\end{align}
	where the last term in equation (\ref{eqn:SDE_sol}) is treated as the usual It\^{o} integral \cite{Oksendal:2003},
	$\theta^{[0,t]}=\{\theta(\tau); \tau \in [0,t]\}$ is the joint action history,
	and $\diff W^{[0,t]}$ is the realization of the wind field in the interval $[0,t]$.
	
	
	\subsubsection{Terminal Conditions of the Game: Crash, Capture and Evasion}
	
	We define three terminating conditions (the boundaries) for the PEG.
	\begin{itemize}
	\item  We first consider the scenario when agents may \textit{crash} into obstacles or may crash on the boundary of the set $C$.
	Let the closed set $O \subset C$ denote the obstacle-region in $C$.
	Then, agent $i$ crashes into an obstacle or the boundary of the set $\partial C$, if at some time $t>0$
	\begin{equation}\label{eqn:crash}
	\text{dist}\big(p^i(t), O \cup \partial C \big)=0,
	\end{equation}
	where the distance of a point $p$ from a set $M$ is $\text{dist}(p,M)\triangleq \inf_{m\in M}\|p-m\|_2$.
	We define two boundaries and their union in the state space corresponding to the cases where either agent crashes as
	\begin{align*}
	\partial S^i_{\text{crsh}} &= \Big\{s\in S: \text{dist}\big(p^i(t),O \cup \partial C \big)=0 \Big\}, \quad i=1,2\\
	\partial S_{\text{crsh}} &= \partial S^1_{\text{crsh}} \cup \partial S^2_{\text{crsh}}.
	\end{align*}
	
    \item	 \textit{Capture} is considered successful when the distance between the two agents at some time instance $t>0$ is less than a prescribed positive value $\rho$, while neither of the agent crashes. That is, for a successful capture at time $t>0$, the following two conditions are simultaneously satisfied
	\begin{equation}\label{eqn:capture}
	\begin{aligned}
	\text{dist}\big(p^1(t),p^2(t)\big)&\leq \rho,\\
	\text{dist}\big(p^i(t), \partial S_{\text{crsh}} \big)&>0,~~~\text{  for } i = 1,2.
	\end{aligned}
	\end{equation}
	The capture condition defines a boundary in the state space as follows:
	\begin{equation*}
	\partial S_{\text{cap}} = \Big\{s\in S: \text{dist}\big(p^1(t),p^2(t)\big) \leq \rho \Big\} \setminus  \partial S_{\text{crsh}}.
	\end{equation*}
	
    \item	\noindent \textit{Evasion} is considered to be successful when the evader enters a closed, non-empty evading region $E \subset \mathbb{R}^2$ without being captured, while neither of the agents crashes. In other words, evasion is successful, if at some time $t>0$, the following conditions are simultaneously satisfied
	\begin{equation}\label{eqn:evasion}
	\begin{aligned}
	\text{dist}&\big(p^2(t),E\big)=0,\\
	\text{dist}&\big(p^1(t),p^2(t)\big)> \rho,\\
	\text{dist}&\big(p^i(t), S_{\text{crsh}} \big)>0,~~~\mathrm{for}~~i = 1,2.
	\end{aligned}
	\end{equation}
 Similarly, the boundary for evasion is defined as
	\begin{equation*}
	\partial S_{\text{evs}} = \Big\{s\in S:  \text{dist}\big(p^2(t),E\big)=0 \Big\} \setminus \big ( \partial S_{\text{crsh}} \cup \partial S_{\text{cap}} \big).
	\end{equation*}
	\end{itemize}
	The boundary of the state space $\partial S$ is defined by the three (disjoint) terminal conditions (\ref{eqn:crash}), (\ref{eqn:capture}) and (\ref{eqn:evasion}) as
	\begin{equation*}
	\partial S = \partial S_{\text{crsh}} \cup \partial S_{\text{cap}} \cup \partial S_{\text{evs}}.
	\end{equation*}
	When the process $\{s(t); t\ge 0\}$ hits the boundary $\partial S$, the game terminates and the outcome of the game is determined by which of the boundaries ($\partial S_{\text{crsh}}, \partial S_{\text{cap}}, \partial S_{\text{evs}}$) is reached by the process.
	
	To this end, we define the interior of the state space $S=C\times C$ of the PEG game as
	$
	S^o=S \setminus \partial S.
	$
	Without loss of generality, we assume that the game starts at some initial condition $s(0) \in S^o$.

	\subsubsection{Admissible Strategies and Rewards}
	An admissible policy (or strategy) for an agent at time $t$
	is a measurable mapping from the observation history ($\{s(\tau); \tau\in [0,t]\}$) to a probability distribution over its action set.
	It is well known that, with 
	full state information, the best Markov policy performs as well as the best admissible strategy \cite{Oksendal:2003,Karatzas:2014}.
	A Markov policy depends only on the current state $s(t)$ and not on the entire history $\{s(\tau); \tau\in [0,t]\}$.
	The Markov policy  of agent $i$ is represented by the measurable function $\mu^i(\cdot,\cdot) : S^o \times \Theta^i\rightarrow \mathbb{R}_{+}$, which denotes the probability distribution that the agent $i$ places over the action space $\Theta^i$ at some state $s$, i.e., $\mu^i(s,\theta^i)$ denotes the probability of choosing action $\theta^i\in \Theta^i$ at state $s$.
	Consequently, for all $s \in S^o$, $\sum_{\theta^i \in \Theta^i} \mu^i(s,\theta^i) = 1$.
	The set of all such policies  for agent $i$ will be denoted by $\Pi^i$.
	Similarly, we denote the joint policy  $(\mu^1,\mu^2)$ of both agents as $\mu$ and the set of all such policies by $\Pi$.
	Let us define the \textit{first exit time} $T_\mu$ under the joint policy $\mu$ as
	\begin{equation*}
	\begin{aligned}
	T_\mu = \inf \Big\{t:s(t) \notin S^o, &~s(t) = \Phi(s(0),t, \theta^{[0,t]},W^{[0,t]}), \\
	&\text{ where } \mathbb P(\theta^i(\tau)=a) = \mu^i\big(s(\tau),a \big)\text{ for } \tau \in [0,t]\Big\}.
	\end{aligned}
	\end{equation*}
	
	Therefore, $T_\mu$ is a random variable that reflects the first time a successful capture, or evasion, or crash occurs.
	Specifically, $T_\mu$ is the first time the process hits the boundary $\partial S$ of the state space $S$.
	
	We define the terminal reward for the pursuer at the terminal state $s(T_\mu) \in \partial S$ as
	\begin{equation}\label{eqn:continuous_reward}
	g^1(s(T_\mu)) =
	\begin{cases}
	~~1 ,			& \text{if } s(T_\mu) \in \partial S_{\text{cap}},\text{ i.e., successful capture occurs,} \\
	-1,             & \text{if } s(T_\mu) \in \partial S_{\text{evs}},\text{ i.e., successful evasion occurs,}\\
	-1, 			& \text{if } s(T_\mu) \in \partial S^1_{\text{crsh}}\setminus \partial S^2_{\text{crsh}},\text{ i.e., only the pursuer crashes,}\\
	~~1, 			& \text{if } s(T_\mu) \in \partial S^2_{\text{crsh}} \setminus \partial S^1_{\text{crsh}},\text{ i.e., only the evader crashes,}\\
	~~0,          	& \text{if } s(T_\mu) \in \partial S^1_{\text{crsh}} \cap \partial S^2_{\text{crsh}}, \text{ i.e., both agents crash}.\\
	\end{cases}\\
	\end{equation}
	In addition, we assume a zero-sum game formulation, and we define the terminal reward for the evader by
	\begin{equation}
	g^2(s(T_\mu)) = -g^1(s(T_\mu)).
	\end{equation}
	
	{This work formulates the PEG as a \textit {game of type}, where we use the win-rate as the performance index of the agents, instead of other metrics such as capture time.
	As a result, {only} the terminal reward is included and no running reward is introduced.
	An extension to include running reward (such as fuel cost or time) into the formulation requires a minor modification of the proposed framework and will be addressed elsewhere.}

	 The expected reward-to-go function for agent $i$ under the joint policy $\mu = (\mu^1,\mu^2)\in \Pi$
	and initial state $s_0$ is $J^{i}_{\mu}$, defined as
	\begin{equation} \label{eq:costJ}
	J^{i}_{\mu} (s_0) = J^{i}_{\mu^1,\mu^2} (s_0)= \mathbb{E}_{s_0,\mu} \Big[g^i\big(s(T_\mu)\big)\Big], \qquad
	i = 1,2,
	\end{equation}
	where the conditional expectation is given by
	\begin{equation*}
	\mathbb{E}_{s_0,\mu} \big[\cdot\big] = \mathbb{E}\Big[\cdot \vert s(t) = \Phi(s_0, t, \theta^{[0,t]},W^{[0,t]}), \mathbb P( \theta^i(\tau)) = \mu^i(s(\tau),\theta^i(\tau)) \text{ for all } \tau \in [0,t] \Big].
	\end{equation*}
	In the PEG setting,  $g^i(\cdot)$ incorporates the reward for either a successful capture, evasion, or crash.
	Both agents try to maximize their own expected reward functions by choosing their respective policies $\mu^i$. Specifically, for a given initial state $s_0$, the optimization problem can be written as
	\begin{equation*}
	\begin{aligned}
	\sup_{\mu^1\in \Pi^1}& J^1_{\mu^1,\mu^2}(s_0),\\
	\sup_{\mu^2\in \Pi^2} &J^2_{\mu^1,\mu^2}(s_0).
	\end{aligned}
	\end{equation*}
	Since the game is zero-sum, the two reward-to-go functions satisfy $J^1_{\mu^1,\mu^2}(s_0) = -J^2_{\mu^1,\mu^2}(s_0)$ for all joint policies $(\mu^1,\mu^2) \in \Pi$ and all states $s_0 \in S^o$. As a result, by defining $J_{\mu^1,\mu^2}(s_0) =J^1_{\mu^1,\mu^2}(s_0)$, the previous optimization problem can be rewritten as
	\begin{equation}
	\sup_{\mu^1\in \Pi^1} \inf_{\mu^2\in \Pi^2}  J_{\mu^1,\mu^2}(s_0) =
	\inf_{\mu^2\in \Pi^2} \sup_{\mu^1\in \Pi^1} J_{\mu^1,\mu^2}(s_0).
	\end{equation}
	Having defined the continuous PEG, in the next section, we will discuss the discretization of this PEG such that the optimal solution of the discretized PEG converges to the optimal solution of the continuous PEG as the discretization step is reduced to zero.
	
	\subsection{Markov Chain Approximation}\label{sec:MCAM}
	
	In this section, we introduce the \textit{Markov Chain Approximation Method} (MCAM) \cite{Kushner:1992}, which will be used to discretize the controlled stochastic process in (\ref{eqn:SDE}) into a discrete-state, discrete-time {competitive Markov Decision Process} \cite{Filar:1996}.
	In the specific case of a pursuit-evasion game, we refer to the discretized process as the discrete Pursuit-Evasion Game (dPEG).
	
	A two-agent zero-sum cMDP (equivalently, dPEG) is a tuple $\mathcal{M} = (S, \Theta^1, \Theta^2, P ,G)$ where $S$ is the finite set of states, $\Theta^i$ is the action set of agent $i$,
	and $P(\cdot|\cdot,\cdot, \cdot):S \times S \times\Theta^1 \times \Theta^2 \rightarrow \mathbb{R}_{\geq0}$ is a function that denotes the transition probabilities satisfying $\sum_{s' \in S}P(s'|s,\theta^1,\theta^2) = 1$ for all $s \in S$ and all $\theta^i \in \Theta^i, ~i = 1,2$.
	The mapping $G: S \rightarrow \mathbb{R}$ is the terminal reward function.
	Without loss of generality, we assume the pursuer (agent 1) maximizes $G$, while the evader (agent 2) minimizes $G$.
	Suppose we start at time 0 at an initial state $s_0 \in S$.
	At time $n \geq 0$, the current state is $s_n$, and we apply a joint control $\theta_n \in \Theta$ to arrive at the next state $s_{n+1}$ according to the transition probability function $P$.
	The result is a controlled Markov Chain $\{s_n; n\in \mathbb{N}_0=\mathbb{N}\cup\{0\}\}$ \cite{Kushner:1992}.
	Notice that such a chain depends on the choice of the initial state $s_0$ and the control sequence picked by both agents $\{\theta^i_n: n\in \mathbb{N}_0,~i=1,2\}$ .
	Up to any time $N > 0$, the observed chain is denoted as $s^{[0,N]} = \{s_n;n \leq N\}$ and is generated by the joint action sequence $\theta^{[0,N-1]}=\{\theta_n, n\leq N-1\}$.
	The chain $s^{[0,N]}$, the control sequence $\theta^{[0,N-1]}$, and the initial state $s_0$ are also referred to as the \textit{trajectory} of $\mathcal{M}$ under the given
	sequence of controls and the initial state.
	In the following, we present a method (Markov Chain Approximation) to obtain an equivalent cMDP from the PEG defined in Section \ref{subsec:continous_PEG}.

  Given a PEG defined by the dynamics (\ref{eqn:SDE}) with the terminal rewards as in (\ref{eqn:continuous_reward}), we adopt a spatio-temporal discretization scheme to reduce the problem to a finite-state cMDP.
	 We start by discretizing the compact region $C$ into a grid-like environment with grid-size $h>0$ (see Fig.~\ref{fig:MCAM}).
	The discretization size $h$ is a user-defined parameter that determines the coarseness of the discretization.
		 In Fig.~\ref{fig:MCAM}, the shaded red and gray areas correspond to evasion areas and obstacles respectively, and the orange and blue markers represent the 
	 positions of the pursuer and the evader.	
	Let us define the set $C_h=\{c_{h,1},\ldots,c_{h,N}\}$ denoting the elementary squares of the grid, so that we can always properly cover the compact set $C$ with $C_h$, namely, $C \subseteq C_h$.
	The cardinality of $C_h$, which is $N$, depends on the grid-size $h$.
	However, due to the compactness of $C$, the cardinality of $C_h$ is always finite for any $h$.
	Note that each $c_{h,j}\in C_h$ represents a unique square of size $h$ in $\mathbb R^2$.
	Let $c^o_{h,j}$ be the interior of the square cell represented by $c_{h,j}$.
	We extend the notion of obstacles and boundaries from $C$ to $C_h$ in the following way.
	The cell $c_{h,j}$ is labeled as an \textit{obstacle cell} if and only if $c_{h,j}^o \cap O \ne \emptyset.$
	The cell $c_{h,j}$ is labeled as a \textit{boundary cell} if and only if  $c_{h,j}^o \cap \partial C \ne \emptyset$. 
	Similarly, if $c_{h,j}^o\cap E\ne \emptyset$, then $c_{h,j}$ is labeled as an \textit{evasion cell}.
	We will say that agent $i$ is in cell $c_{h,j}$ if and only if $p^i(t) \in c_{h,j}^o$.
	If an agent is located on the common boundary of two cells (for example $p^i(t) \in c_{h,j} \cap c_{h,k}$), it is assigned to one of the cells ($j$ or $k$) based on a prior assignment rule.
	Having designed $C_h$, we define the discretization in $S$ as $S_h=C_h\times C_h$, i.e., each element of $S_h$ represents the discretized joint-state of the players.
	Each $s_h\in S_h$ denotes a hypercube of side length $h$ in $\mathbb{R}^4$.
	Analogous to $c_{h,j}^o$, we define the interior of the hypercube $s_h$ by $s_h^o$.
	We now define
	\begin{equation}\label{eqn:boundary}
    	\begin{aligned}
    	\partial S^i_{h,\text{crsh}} &=\{s_h\in S_h: s_h^o\cap \partial S^i_{\text{crsh}} \ne \emptyset\},\\
    	\partial S_{h,\text{crsh}}~&=~\partial S^1_{h,\text{crsh}} \cup \partial S^2_{h,\text{crsh}},\\
    	\partial S_{h,\text{cap}}~&=\{s_h\in S_h\setminus \partial S_{\text{h,crsh}} : s_h^o \cap \partial S_{\text{cap}}\ne \emptyset\},\\
    	\partial S_{h,\text{evs}}~&= \{s_h\in S_h\setminus (\partial S_{\text{h,crsh}}\cup \partial S_{\text{h,cap}} ) : s_h^o \cap \partial S_{\text{evs}}\ne \emptyset\},
    	\end{aligned}
	\end{equation}
	and finally $\partial S_h=\partial S_{h,\text{crsh}} \cup \partial S_{h,\text{cap}} \cup \partial S_{h,\text{evs}}$.
	
	\begin{figure}[htb]
		\centering
		\includegraphics[width=0.9\linewidth]{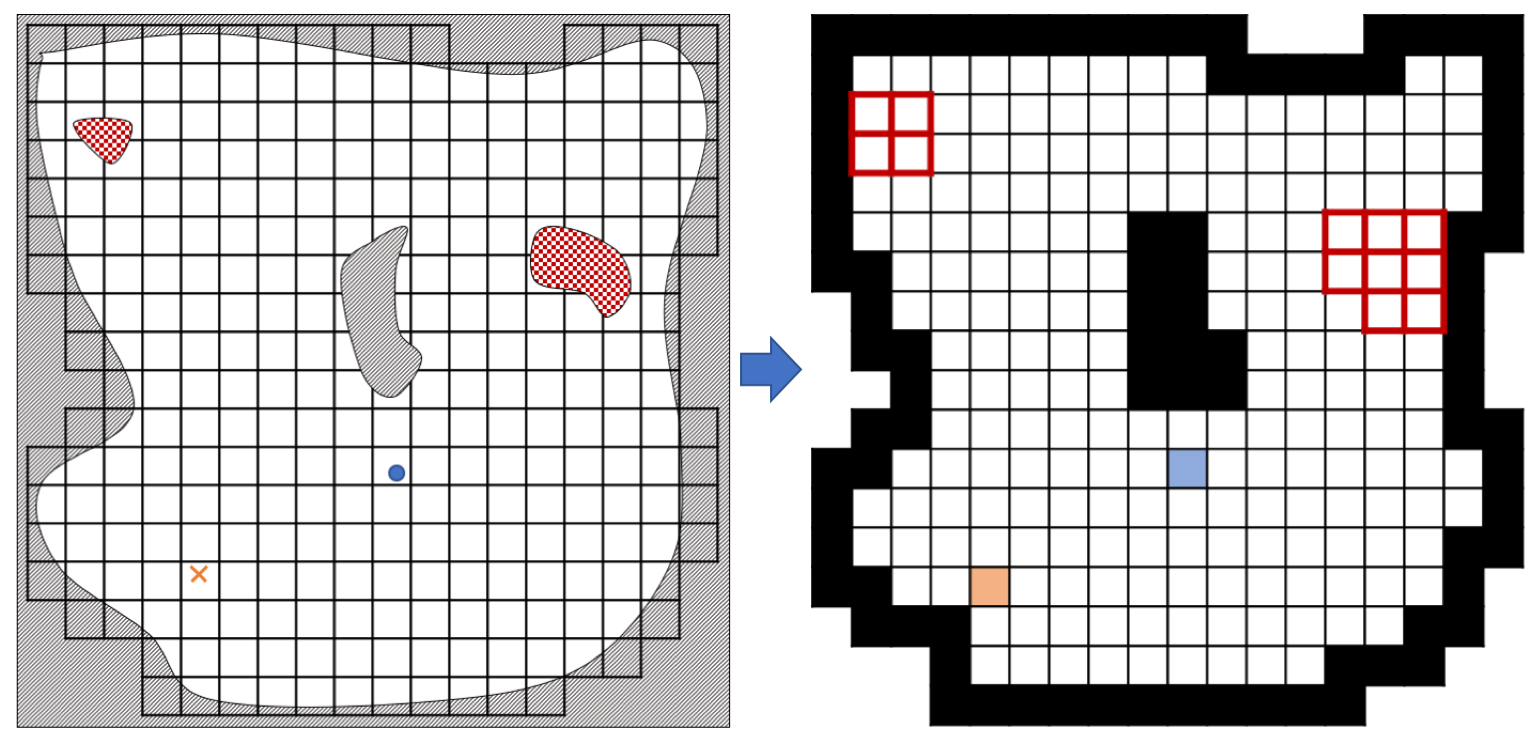}
		\caption{Example of a discretization of the two-dimensional space.
		On the left is the continuous space, and on 
		the right is the discretized space.}
		\label{fig:MCAM}
	\end{figure}
	
	The Markov Chain approximation method approximates the continuous zero-sum PEG using a sequence of cMDPs $\{\mathcal{M}_h\}$,
	where $\mathcal{M}_h = (S_h, \Theta^1, \Theta^2, P_h,G_h)$, and where $S_h=C_h\times C_h$ as defined earlier.
Note that	$\Theta^i$ is the original action set for $i = 1,2$.
	For each $h > 0$, let $s_h^{[0,N]} = \{s_h^n;n \leq N\}$ be a controlled Markov Chain on $\mathcal{M}_h$ until it hits $\partial S_h$. \\
	
	We associate each state $s \in S^o$ in the original continuous space with a non-negative interpolation interval $\Delta t_h(s)$, known as the \textit{holding time} \cite{Kushner:1992}.
	For each elementary hypercube $s_h \in S_h$, the centroid of $s_h$ is denoted as $\alpha(s_h)$, and satisfies
	\begin{align*}
	    \sup_{s\in s_h}\|\alpha(s_h)-s\|\le h/2.
	\end{align*}
	Let us also define $\Delta s_h^n = \alpha(s_h^{n+1})-\alpha(s_h^n)$.
	Since the mapping $\alpha$ is bijective, with a slight abuse of the notation, we use $s_h$ to denote both the cell and its centroid.
	For brevity, we denote $\Delta t^n_h$ to be the holding time at state $s^n_h$, i.e., $\Delta t_h(s^n_h)$.
	Furthermore, we also define $t^n_h = \sum_0^{n-1} \Delta t^n_h$ for $n \geq 1$ and $t^0_h = 0$.
	In addition, we define the discretized terminal cost $G_h$ in a similar manner as in (\ref{eqn:continuous_reward}):
	\begin{equation}\label{eqn:discretized_reward}
	\begin{aligned}
	G_h(s_h) &=
	\begin{cases}
	~~1 ,			& \text{if } s_h \in \partial S_{h,cap}, \\
	-1,             & \text{if } s_h \in \partial S_{h, evs},\\
	-1, 			& \text{if } s_h \in \partial S^1_{h, crsh} \setminus  \partial S^2_{h, crsh},\\
	~~1, 			& \text{if } s_h \in \partial S^2_{h, crsh} \setminus  \partial S^1_{h, crsh},\\
	~~0,          	& \text{if } s_h \in \partial S^1_{h, crsh} \cap  \partial S^2_{h, crsh} \text{ or } s_h \in S_h^o.\\
	\end{cases}\\
	\end{aligned}
	\end{equation}
	
	

\subsection{Discrete PEG}
	
	Let $\Omega_h$ be the sample space of $\mathcal{M}_h$,
	let $\theta_h^n = (\theta_h^{1,n},\theta_h^{2,n})$ be the joint action at time $n$ for the controlled Markov Chain.
	The holding times $\Delta t_h^n$ and the transition probabilities $P_h$ are chosen to satisfy \textit{the local consistency property} \cite{Kushner:1992}, with respect to (\ref{eqn:SDE}), which are given by the following conditions.
	\begin{enumerate}
		\item For all $s_h \in S_h$, $\lim_{h \rightarrow 0^+} \Delta t_h(s_h)=0$.
		\item For all $s_h \in S_h$ and all joint controls $\theta \in \Theta$:
		\begin{equation*}
		\begin{aligned}
		\lim_{h \rightarrow 0^+} \frac{\mathbb{E}_{P_h}[\Delta s_h^n \vert s_h^n = s, \theta_h^n = \theta]}{\Delta t_h(s)} &= b(s,\theta)\\
		\lim_{h \rightarrow 0^+} \frac{\text{Cov}_{P_h}[\Delta s_h^n \vert s_h^n = s, \theta_h^n = \theta]}{\Delta t_h(s)} &= \sigma_w \sigma_w \trps\\
		\lim_{h\rightarrow 0^+} ~\sup_{N\in \mathbb{N}_0, ~s_h^{[0,N]} \in \Omega_h} \norm{\Delta s_h^n}_2 &=0.	\end{aligned}
		\end{equation*}
	\end{enumerate}
	
	As the chain $\{s_h^n;n \in \mathbb{N}_0\}$ is a discrete-time process, we use an \textit{approximate continuous-time interpolation} \cite{Kushner:1992} to approximate the continuous-time process in (\ref{eqn:SDE}).
	We define the continuous-time interpolation $s_h(\cdot)$  of the chain $\{s_h^n;n \in \mathbb{N}_0\}$  and the continuous-time interpolation $\theta_h(\cdot)$ of the action sequence $\{\theta_h^n;n\in\mathbb{N}_0\}$ under the holding time function $\Delta t_h$ as follows: $s_h(\tau) = s_h^n$, and $\theta_h(\tau) = \theta_h^n$ for all $\tau \in [t^n_h,t^{n+1}_h)$.
	
	There are more than one ways to construct such a locally consistent Markov Chain \cite{Kushner:1992}.
	For our specific problem, we follow the construction found in \cite{Kushner:1992} that splits the control inputs from agent 1 (pursuer) and agent 2 (evader). \\
	Specifically, given the dynamics of the PEG as in (\ref{eqn:SDE}), we rewrite the drift term as
	\begin{equation*}
	b(s,\theta) = [b_1(s,\theta), ~b_2(s,\theta), ~b_3(s,\theta),~b_4(s,\theta)]\trps.
	\end{equation*}
Let us define the unit vectors in the four-dimensional state space as
	\begin{equation*}
	e_1 = [1 ~~0 ~~0 ~~0]\trps, ~~~ e_2 = [0 ~~1~~0 ~~0]\trps  , ~~~ e_3 = [0 ~~0 ~~1~~0 ]\trps  , ~~~ e_4 = [0 ~~0 ~~0~~1]\trps  .
	\end{equation*}
	We then define the quantity $Q_h(s)$ as
	\begin{equation*}
	Q_h(s) = h\max_{\theta \in \Theta}\{|b_1(s,\theta)|+|b_2(s,\theta)|+|b_3(s,\theta)|+|b_4(s,\theta)|\}+4~\sigma_w^2,
	\end{equation*}
	and define the interpolation interval as
	\begin{equation}\label{eqn:interp_time}
	\Delta t_h(s) = \frac{h^2}{Q_h(s)}.
	\end{equation}
	
	 Notice that $\Delta t_h(s)$ goes to zero as $h \rightarrow 0$, for all $s \in S_h$. The transition probabilities of the cMDP that approximates the original PEG can be calculated via
	\begin{equation}\label{eqn:trans_prob}
	\begin{split}
	P_h(s\pm he_j|s,\theta) & = (\frac{\sigma_w^2}{2} + h b_j^{\pm})/Q_h(s),\\
	P_h(s|s,\theta) &= 1 - \sum_{j=1}^{4} P_h(s+he_j|s,\theta)- \sum_{j=1}^{4} P_h(s-he_j|s,\theta), \\
	P_h(s'|s,\theta)& = 0,~ \text{if} ~s' \ne s\pm he_j  \text{ and } s' \ne s,
	\end{split}
	\end{equation}
	where $b^+ = \max \{b,0\}$ and $b^- = \max \{-b,0\}$, and, $s+he_j$ is the state $s'$ such that $\alpha(s')=\alpha(s)+e_j$.
	\noindent One can verify that the transition probabilities defined in (\ref{eqn:trans_prob}) are locally consistent with (\ref{eqn:SDE}). We will use the cMDP defined by $\mathcal{M}_h = (S_h,\Theta^1,\Theta^2,P_h,G_h)$ to approximate the original continuous-time PEG.

    Under the previous discretization scheme, any advantage an agent might have in terms of its speed is now translated into a probabilistic advantage,
    as shown in \eqref{eqn:trans_prob}.
    Specifically, one can verify from the construction of these probabilities ($P_h$ depends on $v^1, v^2$ through  $b_i(s,\theta)$) that with the same wind field present,
    the agent with a higher speed is more likely to end up in the cell that it intends to visit than the agent with a lower speed, as a direct consequence of (\ref{eqn:trans_prob}).
    Also, an agent with larger speed will be able to visit a larger number of neighboring cells.

	As stated in \cite{Kushner:1992}, local consistency implies the convergence of the continuous-time interpolations of the trajectories of the controlled Markov Chain to the trajectories of the stochastic dynamical system described in (\ref{eqn:SDE}), and the convergence of optimal reward-to-go functions of discrete cMDPs to that of the original problem.
	
	The definition of a policy for $\mathcal{M}_h$ is analogous to that defined in Section \ref{subsec:continous_PEG}.
	Similarly to the previous discussion, a (randomized) policy for agent $i$ is a mapping $\mu^i_h: S_h \times \Theta^i \rightarrow \mathbb{R}_{+}$, such that for all $s \in S_h$,  $\sum_{\theta^i \in \Theta^i} \mu_h^i(s,\theta^i) = 1$.
	The set of all such policies is denoted by $\Pi_h^i$.
    Let us use the notation $P_h(s'|s,\mu^1_h,\mu^2_h)$ to denote $P_h(s'|s,\mu^1_h,\mu^2_h)\triangleq \sum_{\theta^1\in \Theta^1,\theta^2\in \Theta^2}P_h(s'|s,\theta^1,\theta^2)\mu^1_h(s,\theta^1)\mu^2_h(s,\theta^2)$.
	To this end, given a joint policy $\mu_h = (\mu_h^1,\mu_h^2)$ and the initial state $s_0\in S_h$, the reward-to-go from $s_0$ due to $\mu_h$ is
	\begin{equation}\label{eqn:reward-to-go}
	J_{h,\mu_h} (s_0) =  \mathbb{E}_{P_h} \Big[G_h\big(s^{I_{h}}_h)\big)\Big],
	\end{equation}
	where the  $\mathbb{E}_{P_h} $ denotes the conditional expectation  under $P_h$ given $s^0_h=s_0$, and $\{s^n_h;n\in\mathbb{N}_0\}$ is the sequence of states of the controlled Markov Chain under the policy $\mu_h$, and $I_h$ is the termination time defined as $I_h = \min \{n:s_h^n \in \partial S_h\}$ \footnote{The termination time is well-defined, since the generated cMDP does not have any recurrent states.}.
	Notice that, although not expressed explicitly, the reward-to-go of the cMDP defined in \eqref{eqn:reward-to-go} is dependent on both agents' policies $\mu_h^i$~$(i=1,2)$ through $P_h$ and $s^{I_h}_h$.
	Therefore, the optimization problems for both agents are given by
	\begin{equation*}
	\sup_{\mu_h^1\in \Pi_h^1} \inf_{\mu_h^2\in \Pi_h^2}  J_{h,\mu_h^1,\mu_h^2}(s_0) =
	\inf_{\mu_h^2\in \Pi_h^2} \sup_{\mu_h^1\in \Pi_h^1} J_{h,\mu_h^1,\mu_h^2}(s_0).
	~~~~~~~~     \smash{\raisebox{\dimexpr.5\normalbaselineskip+.5\jot}{$
			$}}
	\end{equation*}
	
	 If the policy of the evader, $\mu_h^{2}$, is given, we can define the \textit{value function} or the \textit{optimal reward-to-go} for the pursuer (the maximizer), denoted by $V^1_{h,\mu^{2}_h},$ as
	\begin{equation}\label{eqn:value_function_p}
	V^1_{h,\mu^{2}_h}(s) = \sup_{\mu^1_h \in \Pi_h^1}J_{h,\mu^1_h,\mu^{2}_h}(s), ~ \text{ for } s \in S_h.
	\end{equation}
	Similarly, given the policy of the pursuer, $\mu_h^{1}$, the value function for the evader (the minimizer), denoted by $V^2_{h,\mu^{1}_h},$ satisfies
	\begin{equation}\label{eqn:value_function_e}
	V^2_{h,\mu^{1}_h}(s) = \inf_{\mu^2_h \in \Pi_h^2}J_{h,\mu^1_h,\mu^{2}_h}(s), ~ \text{ for } s \in S_h.
	\end{equation}
	In general, for a given opponent policy $\mu^{-i}_h$, an \textit{optimal policy}, if it exists, is denoted by $\mu^{i,*}_h$, and satisfies
	\begin{equation}
	J_{h,\mu^{i,*}_h,\mu^{-i}_h}(s) = V^i_{h,\mu^{-i}_h}(s) ~\text{ for all } s \in S_h.
	\end{equation}
	In later sections, we will show that if the opponent's policy is given, the optimal policy of the agent and the corresponding value function do exist for the discretized cMDP and we will also discuss the methods to find them.
	\vspace{5mm}
	
	\section{Level-k Thinking}\label{sec:level-k}
	
	As discussed in the introduction, under the framework of level-$k$ thinking, an agent best responds to a given policy of its opponent.
	Consider the cMDP $\mathcal{M}_h$ constructed as shown in the previous section.
	Let us  assume that the agent $i$ knows the policy of its opponent $\mu_h^{-i}$.
	Then, the transition probabilities in (\ref{eqn:trans_prob}), the expectation operator in (\ref{eqn:reward-to-go}), and the value functions in (\ref{eqn:value_function_p}) and (\ref{eqn:value_function_e}) will only depend on the agent $i$'s policy $\mu_h^{i}$.
	The cMDP $\mathcal{M}_h$ is now a standard  one-sided MDP.
	Similar to the previous definition of cMDP, given the opponent's policy $\mu^{-i}$, the MDP for agent $i$ is a tuple $\mathcal{M}^i_{h,\mu^-i} = (S_h, \Theta^i, P_h,G_h)$.
	The definition of each element in the tuple from Section~\ref{sec:formulation}.\ref{sec:MCAM} can be adapted to this context with minor changes. Given the policy of the evader $\mu^2$, let us consider the MDP for the pursuer (the maximizer), denoted by $\mathcal{M}^1_{h,\mu^2}$. To calculate the value function at a state $s \in S_h$, we solve the following Bellman equation:
	\begin{equation}\label{eqn:Bellman}
	\begin{aligned}
	V_{h,\mu_h^{2}}^1(s) &= \sup_{\mu^1 \in \Pi_h^1} \big\{  \mathbb{E}_{P_h} [V_{h,\mu_h^{2}}^1(s') \vert s,\mu_h^1]  \big\}\\
	& =  \sup_{\mu^1 \in \Pi_h^1} \Big\{  \sum_{s' \in S} P_h \big(s' \vert s,\mu_h^1, \mu_h^{2}\big)V_{h,\mu_h^{2}}^1(s')  \Big\}.
	\end{aligned}
	\end{equation}
	
	For the evader, the Bellman equation is similar but with an $\inf$ operator instead of a $\sup$ operator.
	We will discuss how to solve (\ref{eqn:Bellman}) numerically via value iteration in Section~\ref{sec:value_Iteration}.
	For now, let us assume that (\ref{eqn:Bellman}) can be solved by some algorithm
	given the opponents' policies $\mu_h^{-i}$, and the algorithm will find a unique value function $V_{h,\mu_h^{-i}}^i$ and an optimal policy $\mu_h^{i,*}$.
	
    In what follows, the superscript $k \in \{0,1,2,..\}$ within a parenthesis will denote the level of rationality.
	As we always consider the discrete-state MDP $\mathcal{M}_h$, we henceforth drop the subscript $h$ for notational brevity.
	
	The premise of level-$k$ thinking is that, if $\mu^{-i,(k)}$ is a level $k$ strategy for the opponent of agent $i$, we can define the level-$k$ MDP for agent $i$ as $\mathcal{M}^{i,(k)} = \mathcal{M}^i_{\mu^{-i,(k-1)}}$.
	The level $k+1$ strategy for agent $i$ at state $s$ is then defined as
	\begin{align}\label{eqn:level-k_MDP}
	    \mu^{i,(k+1)}(s)\in\argmax_{\mu^{i,(k+1)}\in \Pi^i}  \sum_{s' \in S} P \big(s' \vert s,\mu^{i,(k+1)}(s), \mu^{-i,(k)}(s)\big)V_{\mu^{-i,(k)}}^i(s'),
	\end{align}
	which is precisely the best response of agent $i$ to its opponent's strategy $\mu^{-i,(k)}$ (and hence it depends on $\mu^{-i,(k)}$).
	We therefore write (\ref{eqn:level-k_MDP}), equivalently, as
	\begin{equation}
	\mu^{i,(k+1)}   \in \text{BestResponse}(\mu^{-i,(k)}).
	\end{equation}
	Notice that if $\mu^{i,(k+1)}$ is the best response to $\mu^{-i,(k)}$, and $\mu^{-i,(k+2)}$ is the best response to $\mu^{i,(k+1)}$, then it is not necessarily  true that $\mu^{-i,(k)}=\mu^{-i,(k+2)}$.
Therefore, if agent $i$ starts with some strategy of level $k$, say $\mu^{i,(k)}$,
	then the agent might expect that its opponent will choose the best-response strategy $\mu^{-i,(k+1)}$.
	In fact, agent $i$ can compute $\mu^{-i,(k+1)}$ since it has knowledge of $P$ and $V^{-i}_{h,\mu^{i,(k)}}$.
	Having computed $\mu^{-i,(k+1)}$, agent $i$ will change its strategy to $\mu^{i,(k+2)}$, which is the best response to $\mu^{-i,(k+1)}$.
	
	\begin{lemma} \label{L:k-level}
	Given a two-player cMDP solved under the level-$k$ thinking framework, if $\mu^{i,(K+2)}=\mu^{i,(K)}$ for an agent $i$ at some finite level $K$, then the two agents reach a Nash equilibrium by applying the strategy pair $\left(\mu^{i,(K)},\mu^{-i,(K+1)}\right)$.
	\end{lemma}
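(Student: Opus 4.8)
The plan is to reduce the claim to the elementary fact that, in a two-player zero-sum game, a pair of policies that are \emph{mutual best responses} is precisely a saddle point, and hence a Nash equilibrium. The entire argument is bookkeeping of the best-response indices produced by the level-$k$ iteration, combined with the defining inequalities of the value functions (\ref{eqn:value_function_p})--(\ref{eqn:value_function_e}). Without loss of generality I would take $i=1$ (the pursuer, i.e.\ the maximizer); the case $i=2$ is identical with the roles of $\sup$ and $\inf$ interchanged.

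First I would record the two best-response relations that hold by construction. By the definition of the level-$k$ update (\ref{eqn:level-k_MDP}), the level-$(k{+}1)$ strategy of any agent is a best response to the opponent's level-$k$ strategy; applying this once for each agent gives
\[
\mu^{2,(K+1)} \in \text{BestResponse}\big(\mu^{1,(K)}\big),
\qquad
\mu^{1,(K+2)} \in \text{BestResponse}\big(\mu^{2,(K+1)}\big).
\]
Invoking the hypothesis $\mu^{1,(K+2)}=\mu^{1,(K)}$ in the second relation yields $\mu^{1,(K)} \in \text{BestResponse}\big(\mu^{2,(K+1)}\big)$. Thus $\mu^{1,(K)}$ and $\mu^{2,(K+1)}$ are mutual best responses to one another.

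It then remains to convert mutual best response into the two-sided saddle inequality. Unpacking the best-response relations through (\ref{eqn:value_function_p})--(\ref{eqn:value_function_e}): since $\mu^{1,(K)}$ maximizes against $\mu^{2,(K+1)}$, we have $J_{\mu^1,\mu^{2,(K+1)}}(s)\le J_{\mu^{1,(K)},\mu^{2,(K+1)}}(s)$ for all $\mu^1\in\Pi^1$ and $s\in S$; and since $\mu^{2,(K+1)}$ minimizes against $\mu^{1,(K)}$, we have $J_{\mu^{1,(K)},\mu^{2,(K+1)}}(s)\le J_{\mu^{1,(K)},\mu^2}(s)$ for all $\mu^2\in\Pi^2$ and $s\in S$. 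Chaining the two gives
\[
J_{\mu^1,\mu^{2,(K+1)}}(s) \;\le\; J_{\mu^{1,(K)},\mu^{2,(K+1)}}(s) \;\le\; J_{\mu^{1,(K)},\mu^2}(s),
\]
which is exactly the statement that neither agent can improve its payoff by a unilateral deviation, so $(\mu^{1,(K)},\mu^{2,(K+1)})$ is a Nash equilibrium.

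The argument carries no heavy computation; the single point deserving care, and what I regard as the main obstacle, is the passage from ``$\mu$ is a best response'' to the one-sided optimality inequality for $J$. This step presumes that a best response actually \emph{attains} the corresponding value function ($\sup$ for the pursuer, $\inf$ for the evader), i.e.\ that the one-sided optimal policy exists and solves the Bellman equation (\ref{eqn:Bellman}). I would discharge this by appealing to the existence and optimality result for the discretized one-sided MDP that the text establishes via value iteration; with that in hand the saddle inequalities follow verbatim, and, because the value functions are defined pointwise over all of $S$, the two inequalities hold simultaneously at every state rather than only at a fixed initial condition.
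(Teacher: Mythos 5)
Your proposal is correct and takes essentially the same route as the paper's own proof: both derive from the level-$k$ construction that $\mu^{-i,(K+1)}$ is a best response to $\mu^{i,(K)}$ and that $\mu^{i,(K+2)}$ is a best response to $\mu^{-i,(K+1)}$, substitute the hypothesis $\mu^{i,(K+2)}=\mu^{i,(K)}$ to obtain mutual best responses, and conclude a Nash equilibrium. The only difference is that you explicitly unpack ``Nash equilibrium by definition'' into the zero-sum saddle inequalities and flag the existence of optimal one-sided policies, details the paper leaves implicit.
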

	
	\begin{proof}
	By the construction of level-($k$) strategies in (\ref{eqn:level-k_MDP}), we know that
	\begin{align*}
	    \mu^{i,(K+2)}    &\in \text{BestResponse}(\mu^{-i,(K+1)}),\\
	    \mu^{-i,(K+1)}   &\in \text{BestResponse}(\mu^{-i,(K)}).
	\end{align*}
	Since $\mu^{i,(K+2)}=\mu^{i,(K)}$, we have the following fixed point property:
	\begin{align*}
	    \mu^{i,(K)}\quad  &\in \text{BestResponse}(\mu^{-i,(K+1)}),\\
	    \mu^{-i,(K+1)} &\in \text{BestResponse}(\mu^{-i,(K)}).
	\end{align*}
	The strategy pair $\left(\mu^{i,(K)},\mu^{-i,(K+1)}\right)$ then corresponds to a Nash equilibrium, by definition.
	\end{proof}
	
    \begin{remark}
        Suppose no constraint of maximum level is imposed, the level-$k$ iterative process terminates at some level $K$ if the condition $\mu^{i,(K+2)}=\mu^{i,(K)}$ is satisfied for some $i \in \{1,2\}$.
    \end{remark}

\begin{remark}
       Lemma~\ref{L:k-level} ensures that any converging level-$k$ strategy is a Nash Equilibrium.
       However,
        in general, there is no guarantee that the previous iterative level-$k$ construction will
  convergence.
      Furthermore, in case of multiple Nash equilibria, one cannot comment on which equilibrium the level-$k$ framework will converge.
    \end{remark}

	In practice, an agent with limited computational resources (e.g., a human) can continue this process only up to a certain level $k_{\text{max}}$.
	Clearly, before termination of the algorithm, the best response to a level-$k$ strategy $\mu^{-i,(k)}$ is, by definition, $\mu^{i,(k+1)}$ and not necessarily the NE strategy $\mu^{i,*}$.
	We want to emphasize the fact that the NE strategy $\mu^{i,*}$ is only optimal when the opponent plays
	$\mu^{-i,*}$.
	If agent $i$ is certain that its opponent is unable to compute $\mu^{-i,*}$ due to being bounded rational and plays strategy $\mu^{-i,(k)}$ for some $k \in \{0,\ldots, k_{\max}\}$, then there is an incentive for agent $i$ to play a strategy other than $\mu^{i,*}$ to maximize its own reward.
	Specifically, if it is the case that $-i$ will play its highest level strategy, i.e., $\mu^{-i,(k^{-i}_{\max})}$, then agent $i$ must play $\mu^{i,(k^{-i}_{\max}+1)}$, which is the best response to $\mu^{-i,(k^{-i}_{\max})}$ and not necessarily $\mu^{i,*}$.
	
    In the following, we will elaborate on how a level-$k$ strategy is constructed.
	For now, recall from the previous discussion that a level-$k$ strategy is built upon a level $k-1$ strategy of the opponent.
	To initialize the level-$k$ strategy construction, we first define the policies $\mu^{i,(0)}$ of the level-0 agents (we will refer these policies as the level-$0$ policies) as
	\begin{equation}
	\mu^{i,(0)}(\theta^{i},s) = \frac{1}{|\Theta^{i}|} ~~ \text{ for all } \theta^{i} \in \Theta^{i}, s \in S,
	\end{equation}
	where $|\Theta^{i}|$ denotes the cardinality of the action set $\Theta^{i}$ which is $\vert \Theta^i\vert =4$ in our case. We can view this level-0 policy as a uniform distribution over the action set $\Theta^{i}$. Namely, the level-0 pursuer and level-0 evader pick a heading from their action spaces uniformly.
	Such a choice reflects the idea that a level-$0$ agent is unable to  play any other action than one at random, and it does not care about optimality.
	The level-1 agents are given the level-0 policies, and they calculate their best response to their opponent's level-0 policy via (\ref{eqn:Bellman}).
	Specifically, $V_{\mu^{-i,(0)}}^{i,(1)}$, i.e.,
	the reward for agent $i$, given that its opponent uses stratregy
	$\mu^{-i,(0)}$, can be computed from the fixed point of the Bellman eqaution
	\begin{equation}
	V_{\mu^{-i,(0)}}^{i,(1)}(s) =  \sup_{\mu^{i,(1)} \in \Pi^i} \Big\{  \sum_{s' \in S} P \big(s' \vert s,\mu^{i,(1)}, \mu^{-i,(0)}\big)V_{\mu^{-i,(0)}}^{i,(1)}(s')  \Big\},
	\end{equation}
	with corresponding strategy $\mu^{i,(1)}$.
	For notational simplicity, we will drop the subscript $\mu^{-i,(k)}$ on the value functions and denote $ V_{\mu^{-i,(k-1)}}^{i,(k)}$ as $V^{i,(k)}$ instead.
	
	This process of building policies level-after-level continues until the prescribed maximum rationality level $k_{\max}^{i} \geq 1$ is achieved.
	In general, the level-$k$ policy for agent $i$ is calculated via
	\begin{equation}\label{eqn:level-k_Bellman}
	V^{i,(k^i)}(s) =  \sup_{\mu^{i,(k^i)} \in \Pi^i} \Bigg\{  \sum_{s' \in S} P \big(s' \vert s,\mu^{i,(k^i)}, \mu^{-i,(k^i-1)}\big)V^{i,(k^i)}(s')  \Bigg\},
	\end{equation}
	where $i\in\{1,2\}$ and $k^i \in \{1,2,\ldots,k^i_{\max}\}$.\\
	Note that, in order to calculate the policy $\mu^{i,(k)}$, for $k \in \{1,\ldots,k^i_{\max}	\}$ from (\ref{eqn:level-k_MDP}),
	agent $i$ one needs to calculate not only its own policy $\mu^{i,(k^i)}$ but also its opponent's policy $\mu^{-i,(k^i-1)}$, for all $k^i \in \{1,\ldots,k^i_{\max}\text{-}1\}$.
	For example, if the pursuer wants to calculate $\mu^{i,(k^i)}$ for $k^i=1,2,3$, it first needs to calculate the best response to $\mu^{2,(0)}$ as its level-1 pursuit policy $\mu^{1,(1)}$.
	To calculate $\mu^{1,(2)}$, which is the best response to $\mu^{2,(1)}$, the pursuer needs to calculate the level-1 evasion policy of the evader, $\mu^{2,(1)}$; similarly, to calculate $\mu^{1,(3)}$, it needs to calculate $\mu^{2,(2)}$ first, and so on.
	
	In summary, at level-$k$, we choose one optimizing agent in the group, while fixing all other agent's levels to $k-1$, and then let the chosen agent compute its best response.
	After all agents in the group have computed their best responses at level $k$, we move on to level $(k+1)$,
	and we iteratively build the structure of the rationality levels for each agent until the agent's maximum rationality level is reached.
	
	By assuming such a structure, we have reduced the cMDPs into a series of one-sided MDPs.
	For each one-sided MDP, $\mathcal{M}^{i,(k)}$
	the optimization is only over a single agent's policy and hence plenty of efficient numerical algorithms exist to solve for the optimal policy \cite{bertsekas1995dynamic}.
	In the following section, we will use the well-known value iteration algorithm to compute the best responses at each level $k$.
	\vspace{5mm}
	\section{Algorithmic Solutions}\label{sec:value_Iteration}
	
	At this point, notice that given the opponent's strategy $\mu^{-i,(k^i-1)}$, the optimal level $k^i$ strategy for agent $i$ can be found by an iterative method such as policy iteration or value iteration \cite{bertsekas1995dynamic}.
	For the sake of completeness, in this section we discuss such an iterative method to solve (\ref{eqn:level-k_Bellman}).
	In the following discussion, we will use the subscript $m$ to denote the iteration index.
	Following this notation, the value iterations for (\ref{eqn:level-k_Bellman}) are given by
	\begin{equation}
	\begin{aligned}
	V^{i,(k^i)}_{m+1}(s) &=  \sup_{\mu^{i,(k^i)} \in \Pi^i} \Bigg\{  \sum_{s' \in S} P \big(s' \vert s,\mu^{i,(k^i)}, \mu^{-i,(k^i-1)}\big)V_m^{i,(k^i)}(s')  \Bigg\}~&&\text{for } s\in S_h^o,\\
	V^{i,(k^i)}_{m}(s) & = G^i_h(s) && \text{for } s \in \partial S_h \text{ and for all } m.
	\end{aligned}
	\end{equation}
	It has been shown that there exist at least one pure policy \cite{Filar:1996} that solves (\ref{eqn:level-k_Bellman}).
	That is, there is an optimal policy $\mu^{i,(k^i)}$, such that at each state $s \in S_h$, there exists $\theta^i \in \Theta^i$ and $\mu^{i,(k^i)}(s,\theta^i)=1$.
	In other words, agent $i$ can always deterministically pick a heading angle at a given state $s$, while still attaining the optimal reward-to-go.
	As a result, we can define the class of pure policies by the map $\psi^{i}: S_h \times \Theta^{i} \rightarrow \{0,1\}$, where for all $s\in S_h$, one has
	\begin{equation}
	\sum_{\theta^i \in \Theta^i} \psi^{i}(s,\theta^i) = 1.
	\end{equation}
	Let the set $\Psi^{i}$ consisting of all such admissible pure policies.
	One immediately observes that $\Psi^i \subset \Pi^i$.
	We can significantly reduce the size of the search space in (\ref{eqn:level-k_Bellman}) by replacing a general mixed policy $\mu^i$ in with pure policy $\psi^i$, which yields
	\begin{equation}\label{eqn:value_iteration}
	\begin{aligned}
	V^{i,(k^i)}_{m+1}(s) &=  \max_{\psi^{i,(k^i)} \in \Psi^i} \Bigg\{  \sum_{s' \in S} P \big(s' \vert s,\psi^{i,(k^i)}, \mu^{-i,(k^i-1)}\big)V_m^{i,(k^i)}(s')  \Bigg\}~&&\text{for } s\in S^o_h, \\
	V^{i,(k^i)}_{m}(s) & = G^i_h(s) && \text{for } s \in \partial S_h \text{ and for all } m,
	\end{aligned}
	\end{equation}
	where $k^{i} \geq 1$.
	As the action space is assumed to be finite, solving the value iteration in (\ref{eqn:value_iteration}) at iteration $m$+1 becomes a problem of finding the heading angle at each state $s$ that maximizes the expected future rewards, given the value function $V_m^{i,(k^i)}$ from the previous iteration.
	With the policy of the opponent known, we can re-write (\ref{eqn:value_iteration}) as
	\begin{equation}\label{eqn:value_iteration_max_action}
	    \begin{aligned}
	    V^{i,(k^i)}_{m+1}(s) &=  \max_{\theta^i \in \Theta^i} \Bigg\{  \sum_{s' \in S} P_{\mu^{-i,(k^i-1)}} \big(s' \vert s,\theta^i \big)V_m^{i,(k^i)}(s')  \Bigg\}~&&\text{for } s\in S^o_h, \\
	    V^{i,(k^i)}_{m}(s) & = G^i_h(s) && \text{for } s \in \partial S_h \text{ and for all } m.
	    \end{aligned}
	\end{equation}
	Notice that the iteration in \eqref{eqn:value_iteration_max_action} does not have a discount factor due to the absence of a running reward in (\ref{eq:costJ}).
	The presence of a discount factor is essential to prove the convergence of value iteration in infinite horizon MDPs~\cite{Kushner:1962}.
However, with the transition probability of the controlled Markov Chain obtained from the Markov Chain approximation method, we have that no recurrent states exists for our problem
except from the absorbing boundary states (the terminal states) defined in (\ref{eqn:boundary}).
	It can then be shown that for this specific case the value functions $V_m^{i,(k^i)}$ in (\ref{eqn:value_iteration_max_action}) converge to the actual value function $V^{i,k^i}$ defined in \eqref{eqn:level-k_Bellman} as the iteration step $m\to \infty$~\cite{bertsekas1995dynamic}.

	\section{Inferring the Opponent's Level}\label{sec:infer}
	
	As discussed in the previous section, agent $i$ can compute its level-$k$ best response if it has knowledge of the opponent strategy $\mu^{-i,(k-1)}$.
	In Section \ref{sec:level-k}, we argued that agent $i$ can play its best strategy if it knows the level of the opponent is $k^{-i}$, and is able to compute $\mu^{i,(k^{-i}+1)}$.
	In real world situations, it is not always possible to know the opponent's rationality level exactly, and hence, the agent must infer the rationality level of its oppenent(s) based on the opponent(s) prior behavior.
	Below we propose an online algorithm to estimate the opponent's rationality level at each time based on the history of the game.
	The algorithm uses a maximum likelihood inferring algorithm utilizing the idea presented in~\cite{Yoshida:2008}.
	
	To this end, let us consider the one-sided MDP for agent $i$ at level-$k$ defined as $\mathcal{M}^{i,(k)}$.
	In a game setting without explicit communication between the agents, agent $i$ can only estimate its opponent's level by observing the trajectory of the states of the system.
	The proposed inferring algorithm provides a maximum likelihood estimator (MLE) of the opponent's level $k^{-i}$, given agent $i$'s level $k^{i}$ and an observed trajectory till the current time $s^{[0,N]} = \{s_{0},s_1, ..., s_N\}$.
	
	\subsection{Fixed Level Model}
	
	We first make the assumption that both agents set their levels at the beginning of the game and do not change their levels over subsequent stages.
	Suppose the maximum rationality level of agent $i$ is $k_{\max}^{i}$.
	Based on the discussion in the Section~\ref{sec:level-k}, agent $i$ should also have access to its opponent's policies till level-($k^i_\text{max}-1$).
	We denote the set of the opponent's level that agent $i$ has access to as $\mathcal{K}^{-i,(k^i_\text{max}\text{-}1)}= \{0,1,...,k_{\max}^{i}\text{-}1\}$.
	Given that agent $i$ plays at level $k^{i}$ and the game starts at initial state $s_{0}$, suppose that the agent $i$ observes a trajectory
	from stage $t_{0}$ to $t_N$ as $s^{[0,N]} = \{s_{0}, ..., s_N\}$.
	Assuming that the opponent plays at level $k^{-i} \in \mathcal{K}^{-i,(k^i_\text{max}\text{-}1)}$, the probability of observing the specific trajectory $s^{[0,N]}$ is
	\begin{equation}\label{eqn:infer_trajectory_conditional}
	\mathbb{P}(s^{[0,N]} |k^i,k^{-i})=\mathbb{P}(s^{[0,N]} |\mu^{i,(k^i)},\mu^{-i,(k^{-i})}) = \prod_{n=0}^{N-1}P_h\Big(s_{n+1}|s_n,~\mu^{i,(k^{i})},\mu^{-i,(k^{-i})}\Big).
	\end{equation}
	The estimate of the opponent's level can then be selected as the maximum likelihood estimator
	\begin{equation}
	\hat{k}^{-i} \in
	\argmax_{k^{-i} \in \mathcal{K}^{-i,(k^i_{\max}-1)}}
	\mathbb{P}\Big(s^{[0,N]} |k^{-i},k^{i}\Big).
	\end{equation}
	
	One can easily extend this algorithm to infer the opponent's level from a moving observation window of length $w+1$.
	Suppose a trajectory $s^{[N-w,N]}$ from stage $t_{N-w}$ to $t_{N}$ is observed, then the conditional probability in (\ref{eqn:infer_trajectory_conditional}) is changed to
    \begin{equation}
	\mathbb{P}(s^{[N-w,N]} |k^i,k^{-i})=\mathbb{P}(s^{[N-w,N]} |\mu^{i,(k^i)},\mu^{-i,(k^{-i})}) = \prod_{n=N-w}^{N-1}P_h\Big(s_{n+1}|s_n,~\mu^{i,(k^{i})},\mu^{-i,(k^{-i})}\Big).
	\end{equation}
	The rest of the algorithm stays the same.
	
	\subsection{Dynamic Level Model }
	
	In the previous section the agents were required to keep their rationality levels fixed, and hence the agents were
	not allowed to change their rationality levels during the game.
	To make the interactions between the agents more realistic, we now let the agents adapt their levels based on their observations.
	To this end,
	suppose the opponent's level $k^{-i}$ remains constant over time.
	We let agent $i$ adapt its rationality level based on its belief about the rationality level of its opponent.
	We also assume that the agent always plays one level higher than its opponent without, however,
	exceeding its maximum rationality level, $k^i_{\max}$.
	Let us pick the observation window to have length $w+1$ and let the current stage to be $t_N$.
Denote the observed trajectory from stage $t_{N-w}$ to stage $t_N$ as $s^{[N-w,N]} =\{s_{N-w},...,s_{N}\}$ and let the levels played by agent $i$ over these stages denoted as $k^{i,{[N-w,N]}} = \{k^i_{N-w}, \cdots, k^i_{N}\}$.
	The probability of observing the specific $s^{[N-w,N]}$ with a fixed $k^{-i}$ is
	\begin{equation}
	\mathbb{P}(s^{[N-w,N]} |k^{i,{[N-w,N]}},k^{-i}) = \prod_{n=N-w}^{N-1}P_h\Big(s_{n+1}|s_n,~\mu^{i,(k_n^{i})},\mu^{-i,(k^{-i})}\Big).
	\end{equation}
	Then
	\begin{equation}
	\hat{k}_N^{-i} \in
	\argmax_{k^{-i} \in \mathcal{K}^{-i,(k^i_{\max}-1)}}
	\mathbb{P}\Big(s^{[N-w,N]}|k^{-i},k^{i,{[N-w,N]}}\Big),
	\end{equation}
	can be selected as the maximum likelihood estimator of opponent's level at stage $t_N$. At the next stage $t_{N+1}$, agent $i$ would then play at level $\min \{\hat{k}^{-i}_N+1, k^i_{\max}\}$.
	\vspace{5mm}
	
	\section{Numerical Example}\label{sec:example}
	
	In this section, we illustrate the theory via a numerical example.
	We consider a two-player pursuit-evasion game on a 18$\times$18 discretized world, namely the lattice indexed as $L_h = \{1,2,...,18\}^4$.
	The mean velocity of the wind is generated randomly, while the wind covariance is set to $\sigma_{w}=0.4$ with no spatial correlation.
	The starting positions of the agents, the evasion states and the obstacles are shown in Fig.~\ref{fig:continuous_grid_def}.
	We use the MCAM to discretize the state space to obtain the grid shown in
	Fig.~\ref{fig:continuous_grid_def}.
	The terminal reward is assigned according to the function $G_h$ defined in (\ref{eqn:continuous_reward}).
	 Both agents have the same speed $v^1 = v^2 = 1$ and the same action set $\Theta^1 = \Theta^2 = \{0,\frac{\pi}{2}, \pi, \frac{3\pi}{2}\}$.
	
	\begin{figure}[H]
		\centering
	\begin{subfigure}[t]{0.49\linewidth}
	\hspace*{-4mm}
		\includegraphics[width=1.05\linewidth]{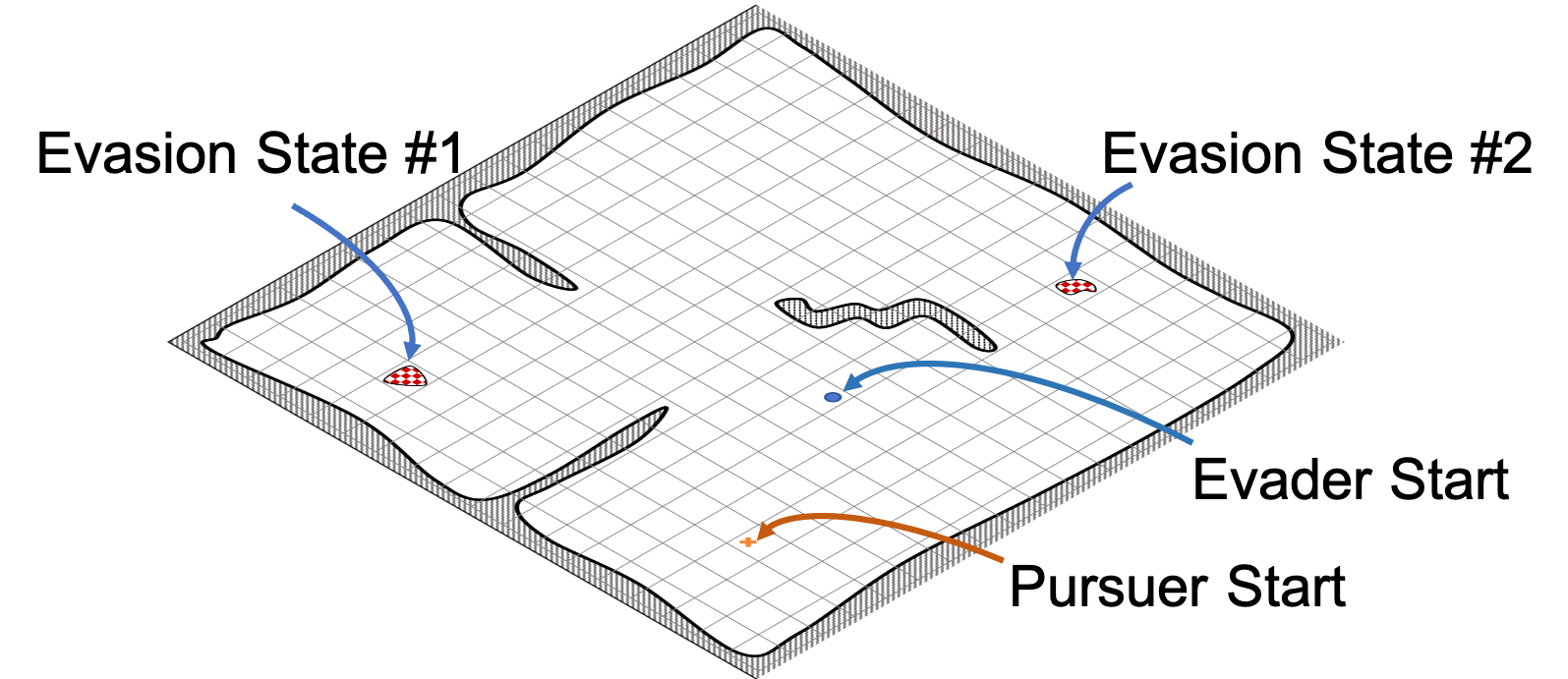}
	\end{subfigure}
        \begin{subfigure}[t]{0.49\linewidth}
			\includegraphics[width=1.05\linewidth]{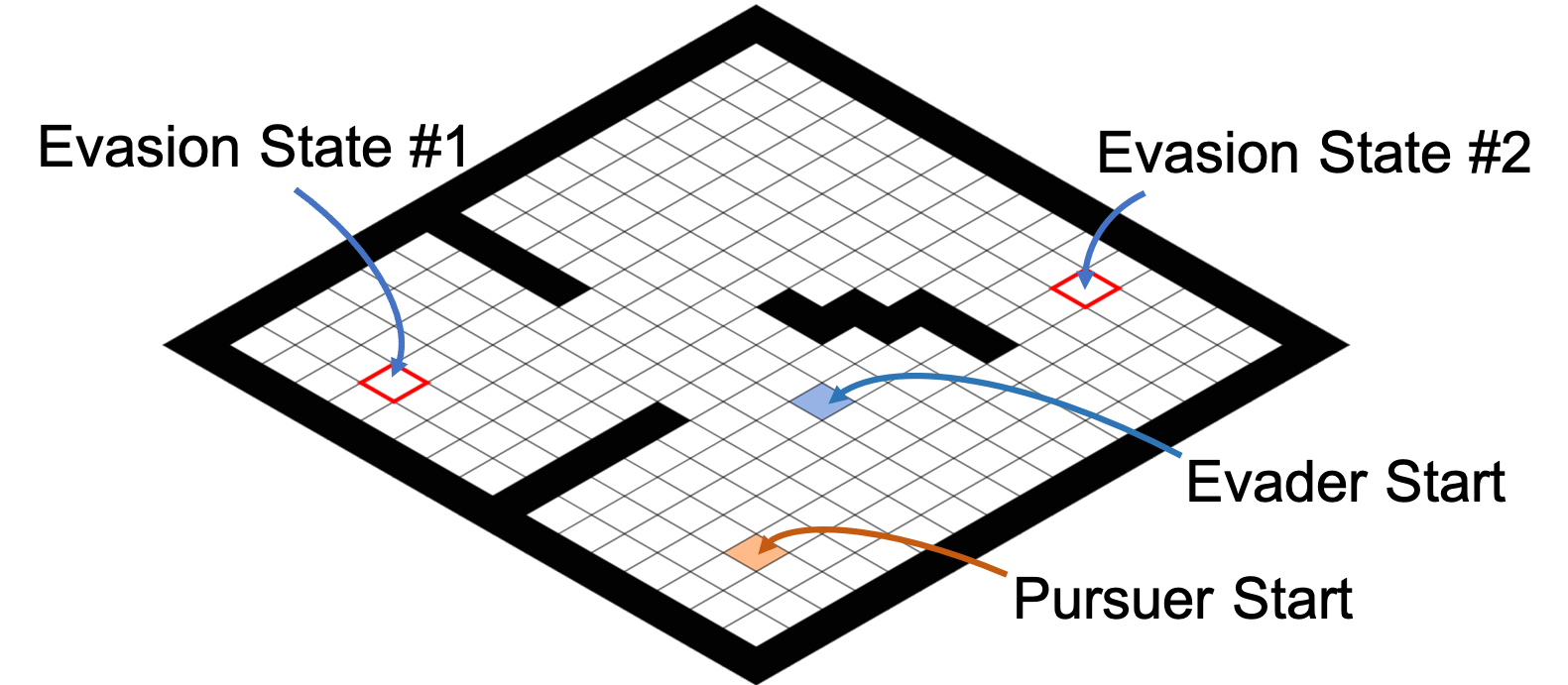}
	\end{subfigure}
		\caption{Continuous (left) and grid (right) space representations: starting positions and evade states. }
		\label{fig:continuous_grid_def}
	\end{figure}
	
	We assume that the level-0 agents pick their headings such that they will not cause an immediate crash, while choosing their action uniformly randomly.
	We construct the level-$k$ thinking hierarchy and calculate the policies of both agents for different rationality levels.
	We then simulate the resulting policies at some selected level pairs over 1500 games.
	To illustrate the results, we use two tables to present the win percentages of the selected scenarios.
	
	In Tables~\ref{tab:l2_evader} and \ref{tab:l2_pursuer}, we fix one agent to be level-2 while varying the level of the other agent.
	In both tables, the highest winning rates are attained at level 3.
	This result is expected, since level 3 policies are by definition the best responses to the level-2 policies.
	It can also be observed that after level 5 the performance only varies slightly, since the policies at high levels do not differ much from each other.
	\begin{table}[H]
		\caption{Pursuer win percentages against level-2 evader. Results are for 1500 simulated games.}\label{tab:l2_evader}
		\begin{center}
			\begin{tabular}{c c c c c c c}
				\hline
				$k^2 = 2$									&$k^1 = 1$ 	& $k^1 = 2$ & $k^1 = 3$ & $k^1 = 4$	& $k^1 = 5$	& $k^1 = 6$ \\
				\hline
				Pursuer Wins						&  48.5			  & 47.6		& 51.7			&50.9		& 51.2		& 51.1	\\
				~~ Due to Capture				& 36.4			 &38.2			& 45.3			&43.1		& 42.7		& 42.9	 		\\
			\end{tabular}
		\end{center}
		
		\end{table}
	
	\begin{table}[H]
		\caption{Evader win percentages against level-2 pursuer. Results are for 1500 simulated games.}	\label{tab:l2_pursuer}
		\begin{center}
			\begin{tabular}{c c c c c c c}
				\hline
				$k^1 = 2$						&$k^2 = 1$ 	& $k^2 = 2$ & $k^2= 3$ & $k^2= 4$ & $k^2= 5$ & $k^2= 6$	\\
				\hline
				Evader Wins						&  47.5			  & 52.4		& 53.8			&52.9	&53.2	&53.1			\\
				~~ Due to Evasion			& 35.6			 &42.2			& 45.3			&44.5	&44.7	&44.2 		\\
			\end{tabular}
		\end{center}
	
	\end{table}
	In Fig.~\ref{fig:trajectory} two sample trajectories are shown.
	The first trajectory depicts a level-3 evader against a level-2 pursuer.
	One observes the ``deceiving" behavior of the evader: it first moves towards evasion state \#2 (on the right), which tricks the level-2 pursuer to also go right and take the shorter route beneath the obstacles for defending the evade state \#2.
	The evader then suddenly turns left and goes to the evade state \#1 (on the left). When the evader reveals its true intention of evading at evasion state \#1, it is too late for the pursuer to capture the evader.
	The second trajectory depicts a successful capture.
	Different from the previous scenario, the level-3 pursuer has perfect information about the policies of the level-2 evader, as a result it can predict well what the evader will do in the next time step. The pursuer then simply acts accordingly to maximize the probability of a successful capture.
	It is also noteworthy that the stochastic wind field does disrupt the trajectories of the agents.
	For example, at point A in Fig.~\ref{fig:trajectory}(a), the evader intends to go left directly to the evasion state but is instead pushed upward due to the stochastic wind field.
	
	\begin{figure}[H]
		\centering
		\begin{subfigure}[t]{0.43\linewidth}
			\includegraphics[width=\linewidth]{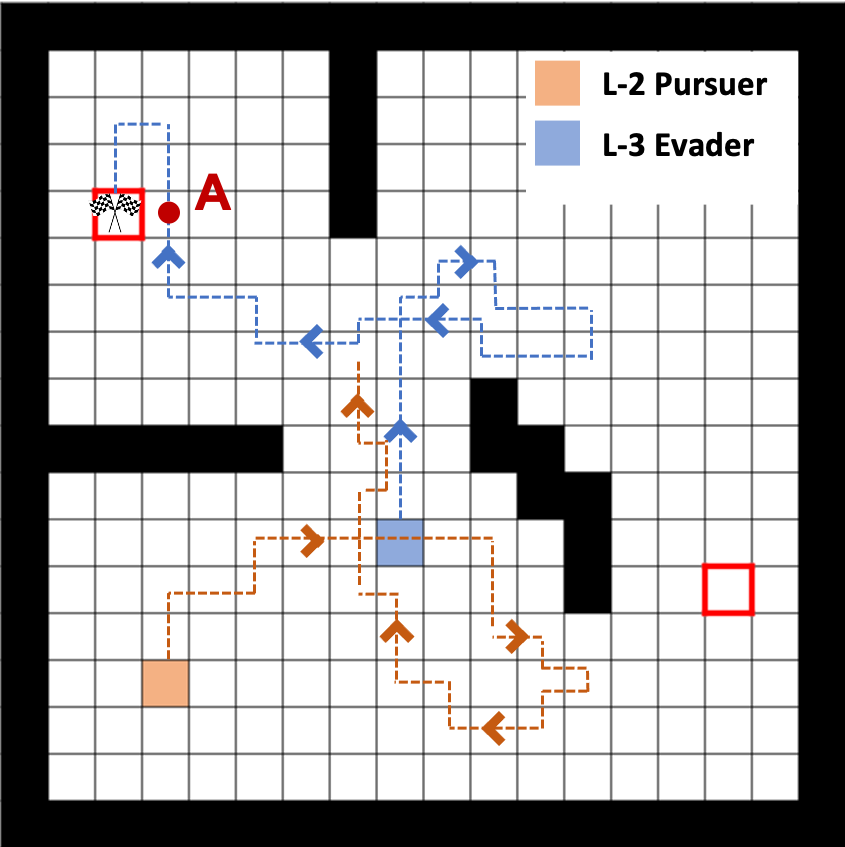}
			\caption{}
		\end{subfigure}\hspace{3mm}
		\begin{subfigure}[t]{0.43\linewidth}
			\includegraphics[width=\linewidth]{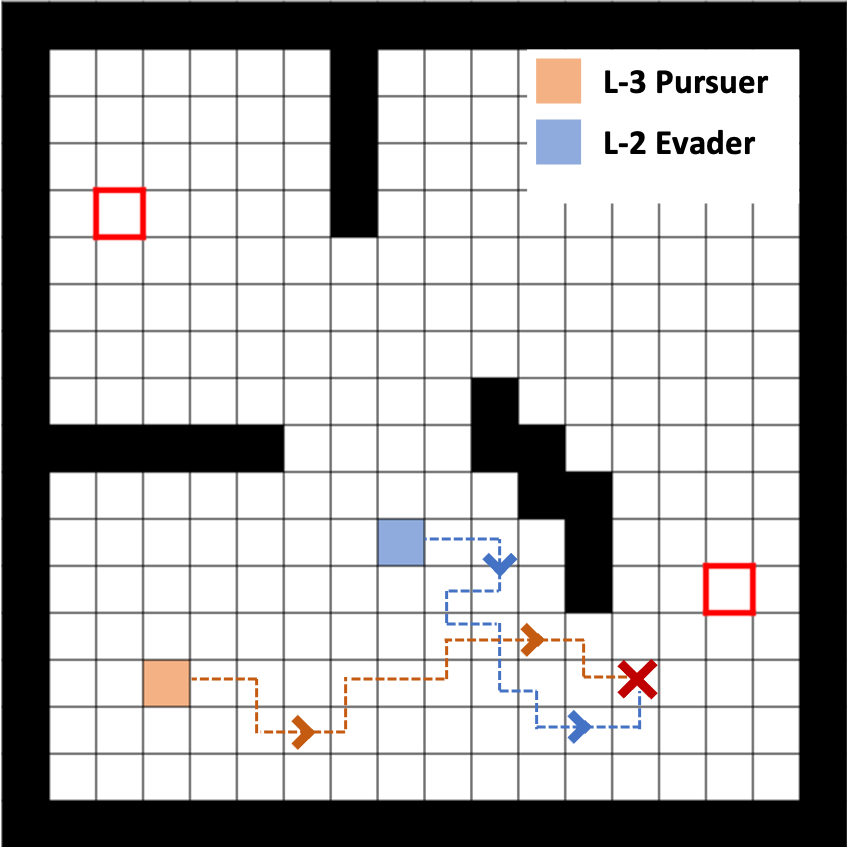}
			\caption{}
		\end{subfigure}
		\caption{Sample trajectories of agents playing fixed rationality levels. The first game is won by the evader; the second is won by the pursuer. (a) Level-2 Pursuer vs. level-3 Evader; (b) Level-3 Pursuer vs. level-2 Evader.}
		\label{fig:trajectory}
	\end{figure}
	
	The following color maps present an example of the outcomes from the static inferring algorithm introduced in Section~\ref{sec:infer}.
	The vertical axis represents the rationality levels of the opponents, and the $x$-axis represents the time instances.
	The color depicts the (normalized) conditional probability of the corresponding level.
	
	\begin{figure}[H]
		\centering
		\vspace{-5mm}
		\begin{subfigure}[t]{0.7\linewidth}
			\includegraphics[width=\linewidth]{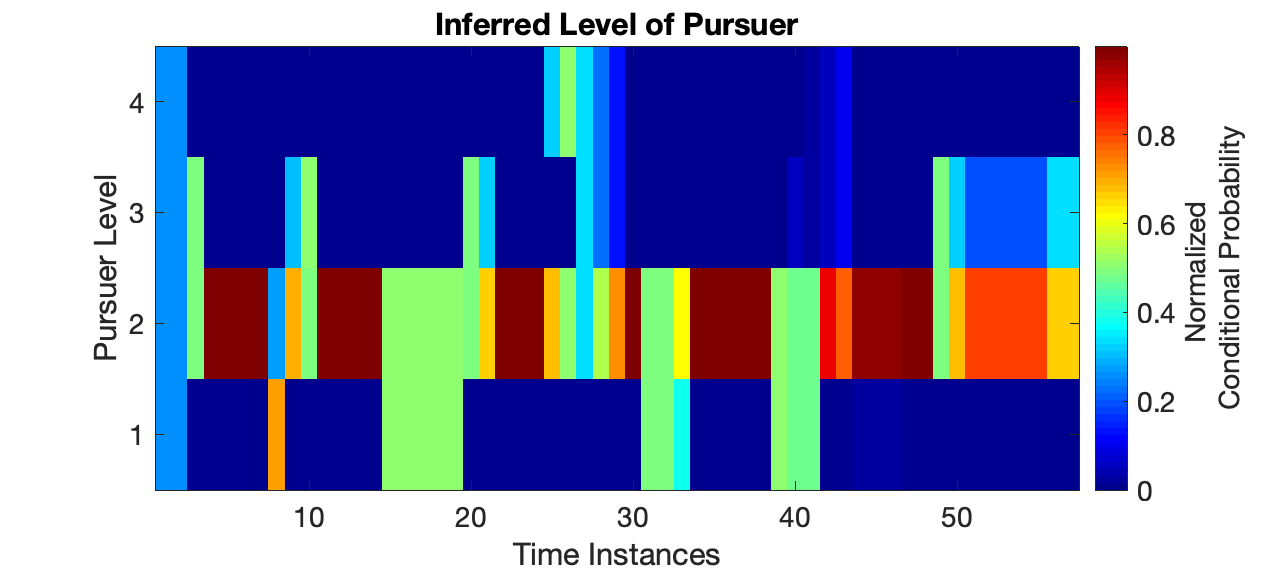}
			\caption{}
		\end{subfigure}
	
		\begin{subfigure}[t]{0.7\linewidth}
			\includegraphics[width=\linewidth]{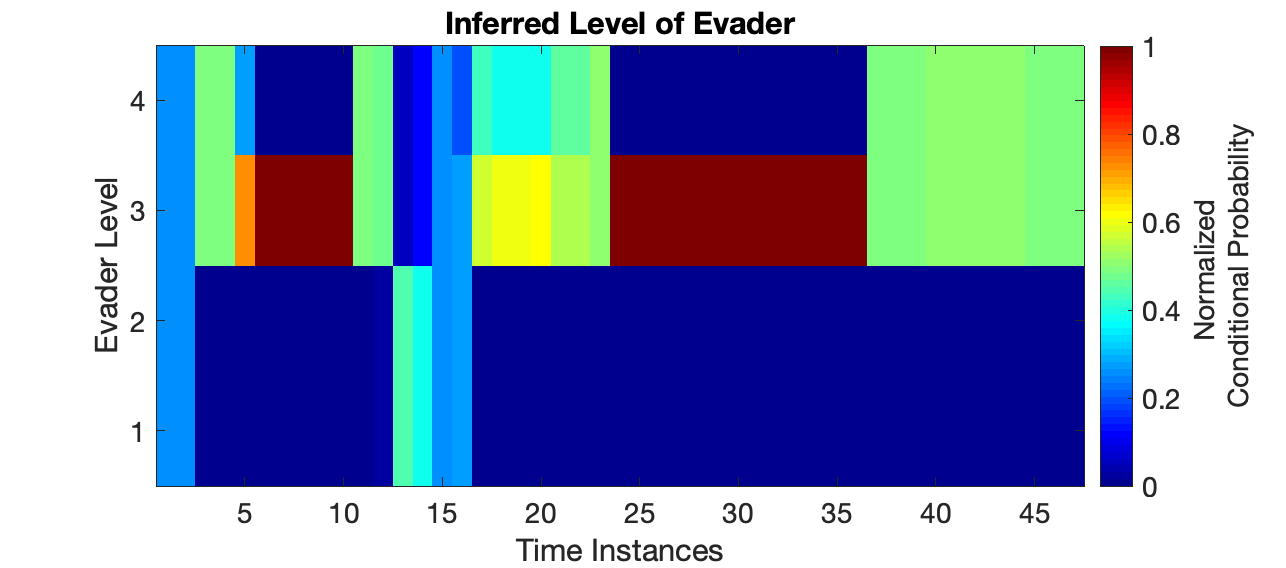}
			\caption{}
		\end{subfigure}
			\vspace{-2mm}
		\caption{Examples of inference results. 
		(a) A level-3 Evader's inference of a level-2 Pursuer; (b) A level-2 Pursuer's inference of a level-3 Evader.}
		\label{fig:inference}
		\vspace{-5mm}
	\end{figure}
	
	In Fig.~\ref{fig:inference}(a), the level-3 evader can infer the  rationality level of the level-2 pursuer with certain confidence.
	However, in Fig.~\ref{fig:inference}(b), the level-2 pursuer has  some trouble inferring the level-3 evader accurately.
	As discussed earlier, at high levels the policies of the agents are almost the same at most states, which means, for example, that
	a level-4 pursuer and a level-3 pursuer may take the same action in certain regions of the state space.
	In Fig.~\ref{fig:inference}(b), from time instances 37 to 49, the two agents may have entered such a region, where the policies of the level-3 pursuer and the level-4 pursuer are the same.
	This phenomenon makes the inferring process challenging at high rationality levels, in general.
	However, since the policies of both the pursuer and the evader become similar at high levels, even picking a wrong level does not harm performance significantly. %
Finally, we present in Fig.~\ref{fig:Dynamic Levels} the outcome of the dynamic level model introduced in Section~\ref{sec:infer}, in which each agent always plays one level higher than the inferred level of its opponent without exceeding its maximum rationality level.
One may notice some significant oscillations at the beginning of the game, but eventually, the pursuer starts to play at its highest level, and the evader plays at one level higher accordingly till the game terminates.

	\begin{figure}[H]
		\vspace{-1mm}
		\centering
		\includegraphics[width=0.9\linewidth]{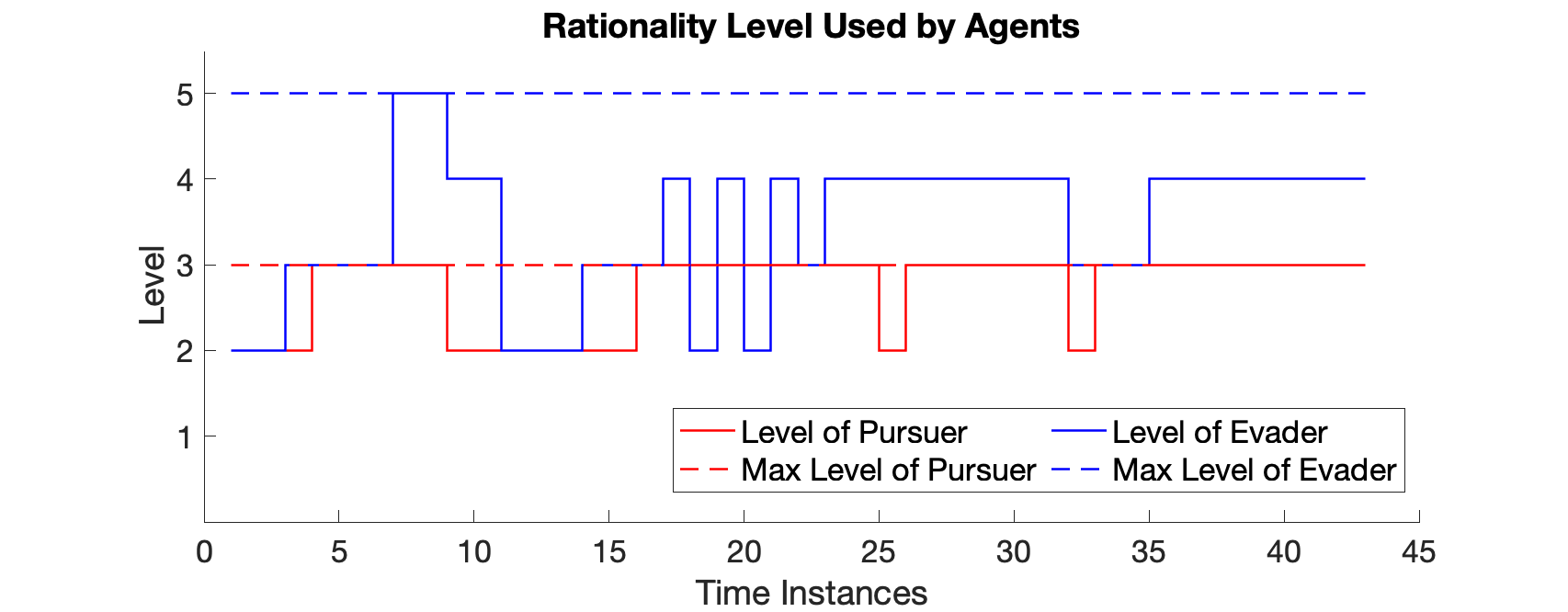}
		\caption{Example of dynamic level model. Evader has maximum rationality level of 3 and Pursuer has 5.}
		\label{fig:Dynamic Levels}
		\vspace{-5mm}
	\end{figure}
	
	\section{Conclusions}\label{sec:conclusion}
	
	In this work we have studied a classical pursuit evasion game in a stochastic environment
	where the agents are bounded rational.
	We provide a framework to incorporate the notion of bounded rationality in decision-making  for stochastic pursuit-evasion problems
	through Markov chain approximation followed by a $k$-level thinking framework.
	Bounded rational agents do not compute the Nash equilibrium,
	but they rather
	compute their level-$k$ strategy based on an iterative algorithm, which constructs the level-$k$ strategy from level-$(k-1)$ strategy.
	A bounded-rational agent with maximum rationality  level $k_{\max}$ can employ a strategy of any level $k=0,1,\cdots,k_{\max}$.
	However, the optimal rationality level $k^{*}$ should be determined by the opponent's corresponding
	rationality level, and hence, it is crucial for an agent to be able to infer its opponent's rationality level.
	We propose two algorithms (fixed and dynamic) that assist the agent with inferring its opponent's rationality level.
	Through simulation results we have demonstrated the behavioral and statistical outcomes of the game.
	In particular, we have demonstrated how an agent playing at a higher rationality level may be able to deceive another agent of lower rationality level.
	As part of future work, it may be of interest to investigate this framework in a team game setup and study the implications of different rationality levels on the overall performance of the team.

	
	\bibliographystyle{ieeetr}
	\bibliography{Ref}

\end{document}